\documentclass{easychair}
\usepackage[utf8]{inputenc}
\usepackage[T1]{fontenc}

\usepackage{enumerate}

\usepackage{bussproofs}

\usepackage{tikz}
\usetikzlibrary{matrix}

\usepackage{amsfonts}
\usepackage{amsmath}
\usepackage{amssymb}
\usepackage{amsthm}
\usepackage{csquotes}


\theoremstyle{plain}
\newtheorem{definition}{Definition}
\newtheorem{theorem}{Theorem}
\newtheorem{lemma}{Lemma}

\newtheorem{problem}{Problem}
\newtheorem{example}{Example}


\def\AA{\mathbb{A}}

\def\NN{\mathbb{N}}


\DeclareMathOperator{\rank}{rank}
\DeclareMathOperator{\order}{order}



\def\AX{\ensuremath{(\text{Ax})}}

\def\AI{\ensuremath{(\text{$\to$I})}}
\def\AE{\ensuremath{(\text{$\to$E})}}
\def\II{\ensuremath{(\text{$\cap$I})}}
\def\IE{\ensuremath{(\text{$\cap$E})}}

\title{Rank 3 Inhabitation of Intersection Types Revisited (Extended Version)\footnote{Based on: TYPES 2016, Types for Proofs and Programs -- 22nd Meeting, Novi Sad, Serbia, 2016, pp. 29--30.}}

\author{
Andrej Dudenhefner\inst{1}
\and
Jakob Rehof\inst{1}
}

\institute{
  Technical University of Dortmund,
  Dortmund, Germany\\
  \email{\{andrej.dudenhefner, jakob.rehof\}@cs.tu-dortmund.de}
 }


\authorrunning{Dudenhefner and Rehof}

\titlerunning{Rank 3 Inhabitation Revisited}

\begin{document}
\maketitle
\begin{abstract}
We revisit the undecidability result of rank 3 intersection type inhabitation (Urzyczyn 2009) in pursuit of two goals. First, we strengthen the previous result by showing that intersection type inhabitation is undecidable for types of rank $3$ \emph{and order 3}, i.e. it is not necessary to introduce new functional dependencies (new instructions) during proof search. Second, we pinpoint the principles necessary to simulate Turing machine computation \emph{directly}, whereas previous constructions used a highly parallel and non-deterministic computation model. Since our construction is more concise than existing approaches taking no detours, we believe that it is valuable for a better understanding of the expressiveness of intersection type inhabitation.
\end{abstract}

\section{Introduction}
Intersection types capture deep semantic properties of $\lambda$-terms such as normalization and were subject to extensive study for decades~\cite{BDS13}. 
  Due to their expressive power, intersection type inhabitation (given a type, does there exist a term having the type?) is undecidable in standard intersection type systems~\cite{ghilezan96}. 
  
  Urzyczyn showed in 2009~\cite{urzy09} that restricted to rank 2 intersection type inhabitation is exponential space complete and is undecidable for rank 3 and up. Later, Salvati et al. discovered the equivalence of intersection type inhabitation and $\lambda$-definability~\cite{smgb12}, which further improved our understanding of the expressiveness of intersection types.
  
  One can take several routes in order to show undecidability of intersection type inhabitation (abbreviated by IHP, resp. IHP3 for rank 3). In~\cite{BDS13} the following reduction is performed: EQA $\leq$ ETW $\leq$ WTG $\leq$ IHP, where EQA is the emptiness problem for queue automata, ETW is the emptiness problem for typewriter automata and WTG is the problem of determining whether one can win a tree game. A different route taken in~\cite{urzy09} performs the following reduction: ELBA $\leq$ SSTS1 $\leq$ HETM $\leq$ IHP3, where ELBA is the emptiness problem for linear bounded automata, SSTS1 is the problem of deciding whether there is a word that can be rewritten to 1s in a simple semi-Thue system and HETM is the halting problem for expanding tape machines. Alternatively, following the route of \cite{smgb12} via semantics, the following reduction can be performed: WSTS $\leq$ LDF $\leq$ IHP3, where WSTS is the word problem in semi-Thue systems and LDF is the $\lambda$-definability problem. Each of these routes introduces its own machinery that may distract from the initial question. As a result, it is challenging to even give examples of particularly hard inhabitation problem instances, or distinguish necessary properties on a finer scale than the rank restriction.
  
  In this work (cf. Section \ref{sec:tm}), we reduce the halting problem for Turing machines to intersection type inhabitation directly. In particular, we show how arbitrary Turing machine computations can be directly simulated using proof search. Additionally (Section \ref{sec:ssts}), we outline the simplest construction (to date) to show undecidability of rank 3 inhabitation using simple semi-Thue systems. The main benefit of presented approaches is their corresponding accessibility and simplicity. 
  

\section{Preliminaries}

We briefly assemble the necessary prerequisites in order to discuss inhabitation in the intersection type system. $\lambda$-terms (cf. Definition \ref{def:lambda-terms}) are denoted by $M,N,L$ and intersection types (cf. Definition \ref{def:simple-types}) are denoted by $\sigma, \tau$, while type atoms are denoted by $a,b,c$ and drawn from the denumerable set $\AA$. The intersection operator $\cap$ binds stronger than $\to$ and is assumed to be idempotent, commutative and associative.
The rules of the considered intersection type system (also used in \cite{urzy09}) are given in Definition \ref{def:type-system}.

\begin{definition}[$\lambda$-Terms]
\label{def:lambda-terms}
$M,N,L ::= x \mid (\lambda x.M) \mid (M \, N)$
\end{definition}

\begin{definition}[Intersection Types]
\label{def:simple-types}
$\sigma, \tau ::= a \mid \sigma \to \tau \mid \sigma \cap \tau \text{ where } a \in \AA$
\end{definition}

\begin{definition}[Intersection Type System]\ \\
\label{def:type-system}

\medskip

\begin{tabular}{c c}
{\RightLabel{\AX}
\AxiomC{}
\UnaryInfC{$\Gamma, x : \sigma \vdash x : \sigma $}
\DisplayProof} \\
\\
{\RightLabel{\AI}
\AxiomC{$\Gamma, x: \sigma \vdash M : \tau$}
\UnaryInfC{$\Gamma \vdash \lambda x.M : \sigma \to \tau$}
\DisplayProof} & 
{\RightLabel{\AE}
\AxiomC{$\Gamma \vdash M : \sigma \to \tau$}
\AxiomC{$\Gamma \vdash N : \sigma$}
\BinaryInfC{$\Gamma \vdash M \; N : \tau$}
\DisplayProof}\\
\\
{\RightLabel{\II}
\AxiomC{$\Gamma \vdash M : \sigma$}
\AxiomC{$\Gamma \vdash M : \tau$}
\BinaryInfC{$\Gamma \vdash M : \sigma \cap \tau$}
\DisplayProof} & 
{\RightLabel{\IE}
\AxiomC{$\Gamma \vdash M : \sigma \cap \tau$}
\UnaryInfC{$\Gamma \vdash M : \sigma$}
\DisplayProof}
\end{tabular}
\end{definition}

\noindent 
After Leivant \cite{le83} we define the \emph{rank} and the \emph{order} of a type.

\begin{definition}[Rank]
\begin{align*}
\rank(\tau) & = 0 \text{ if } \tau \text{ is a simple type, i.e. containing no } \cap\\
\rank(\sigma \cap \tau) & = \max\{1, \rank(\sigma), \rank(\tau)\}  \\
\rank(\sigma \to \tau) & = \max\{1+\rank(\sigma), \rank(\tau)\} 
\end{align*}
\end{definition}

\begin{definition}[Order]
\begin{align*}
\order(a) & = 0 \\
\order(\sigma \cap \tau) & = \max\{\order(\sigma), \order(\tau)\}  \\
\order(\sigma \to \tau) & = \max\{1+\order(\sigma), \order(\tau)\} 
\end{align*}
\end{definition}

\begin{example}
Let $\tau = \bigcap_{i=1}^n (((d_i \to c_i) \to b_i) \to a_i) \to a$. We have $\rank(\tau) = 2$ and $\order(\tau)=4$.
Note that types that are similar to $\tau$ are used in \cite{urzy09} to encode expanding tape machine switches.
\end{example}

\begin{problem}[Intersection Type Inhabitation]
\label{prb:inhabitation}
Given a type $\tau$, is there a $\lambda$-term $M$ such that $\vdash M : \tau$?
\end{problem}

\newpage

\def\sp{\text{\textvisiblespace}}
\def\tup#1#2{\langle #1, #2 \rangle}

\section{Turing Machine Simulation}
\label{sec:tm}
The undecidability proof in \cite{urzy09} reduces emptiness of linear bounded automata to a word existence problem in simple semi-Thue systemd to halting of expanding tape machines to inhabitation of an intersection type $\tau$ with $\rank(\tau)=3$ and $\order(\tau)=4$. In \cite{urzy09} order of $4$ is necessary to hide and restore a word on the tape of the machine by introducing new instructions. In this section, we reduce the halting problem to inhabitation of an intersection type $\tau_\star$ with $\rank(\tau_\star)=\order(\tau_\star)=3$.
For this purpose, fix a Turing machine $T = (\Sigma, Q, q_0, q_f, \delta)$, where 
\begin{itemize}
\item $\Sigma$ is the finite, non-empty set of tape symbols
\item $Q$ is the finite, non-empty set of states
\item $q_0 \in Q$ is the initial state
\item $q_f \in Q$ is the final state
\item $\delta : (Q \setminus \{q_f\}) \times \Sigma \to Q \times \Sigma \times \{+1,-1\}$ is the transition function
\end{itemize}
Let $\dot{\cup}$ denote disjoint union. We define the set of type atoms 
$$\AA = \Sigma \mathbin{\dot{\cup}} \{l, r, \bullet\} \mathbin{\dot{\cup}} \{\tup{q}{a} \mid q \in Q, a \in \Sigma\} \mathbin{\dot{\cup}} \{\circ, *, \#, \$ \}$$
In the first part of our construction, we capture computation of $T$ using the following types: 
\begin{align*}
\sigma_f = & \bigcap\limits_{c \in \Sigma} \tup{q_f}{c} \cap \bigcap\limits_{c \in \Sigma} c \\
\text{ for each } t = & ((q,c) \mapsto (q', c', +1)) \in \delta\\
\sigma_t = & \bigcap\limits_{a \in \Sigma} (\bullet \to a \to a) \cap (l \to c' \to \tup{q}{c}) \cap \bigcap\limits_{a \in \Sigma} (r \to \tup{q'}{a} \to a) \\
\text{ for each } t = & ((q,c) \mapsto (q', c', -1)) \in \delta\\
\sigma_t = &\bigcap\limits_{a \in \Sigma} (\bullet \to a \to a) \cap (r \to c' \to \tup{q}{c}) \cap \bigcap\limits_{a \in \Sigma} (l \to \tup{q'}{a} \to a) \\
\end{align*}
Intuitively, the defined types are used to achieve following goals
\begin{itemize}
\item $\tup{q}{a}$ represents that $T$ is at the position containing $a$ in state $q$
\item $l,r$ (left, right) mark neighboring cells while other cells are marked by $\bullet$
\item $\sigma_f$ recognizes whether an accepting state is reached
\item $\sigma_t$ transforms the state according to $\delta$
\end{itemize}

\noindent Let us abbreviate $[i,j] := \{i, i+1, \ldots, j\}$.

\ifx\false

\begin{lemma}
We have
\begin{enumerate}[$(i)$]
\item $\Gamma_1 \vdash e : \tau_1, \ldots, \Gamma_{n-1} \vdash e : \tau_{n-1}, \Gamma_n \vdash e : \tau_n \cap \sigma_n$ iff \\
$\Gamma_1 \vdash e : \tau_1, \ldots, \Gamma_n \vdash e : \tau_n, \Gamma_n \vdash e : \sigma_n$
\item $\Gamma_1 \vdash e : \sigma_1 \to \tau_1, \ldots, \Gamma_n \vdash e : \sigma_n \to \tau_n$ iff \\
$\Gamma_1, x : \sigma_1 \vdash e' : \tau_1, \ldots, \Gamma_n, x : \sigma_n \vdash e' : \tau_n$
\item $\Gamma_1 \vdash e : a_1, \ldots, \Gamma_n \vdash e : a_n$ iff \\
$\Gamma_1 \vdash x : \sigma_1^1 \to \ldots \to \sigma_1^l \to a_1, \ldots, \Gamma_n \vdash x : \sigma_n^1 \to \ldots \to \sigma_n^l \to a_n$ and 
$\Gamma_1 \vdash e_i : \sigma_1^i, \ldots, \Gamma_n \vdash e_i : \sigma_n^i$ for $1 \leq i \leq l$
\end{enumerate}
\end{lemma}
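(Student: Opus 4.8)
The plan is to derive all three equivalences from the typing rules of Definition \ref{def:type-system} together with suitable \emph{generation} (inversion) lemmas, treating each direction separately. In every item the ``if'' direction is an immediate, constructive application of a single rule, while the ``only if'' direction requires inverting a derivation and is where the real work lies. I would first observe that in $(i)$ only the distinguished $n$-th judgment is rewritten and that in $(ii)$ and $(iii)$ the same transformation is applied uniformly to every judgment; the surrounding judgments $\Gamma_i \vdash e : \tau_i$ are merely carried along. Hence each item reduces to a statement about a single judgment, applied pointwise.

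The constructive directions are direct. For $(i)$, from $\Gamma_n \vdash e : \tau_n \cap \sigma_n$ one obtains $\Gamma_n \vdash e : \tau_n$ and $\Gamma_n \vdash e : \sigma_n$ by \IE (using commutativity of $\cap$ for the second component), and \II recombines them. For $(ii)$, with $e = \lambda x.\,e'$, applying \AI to each $\Gamma_j, x : \sigma_j \vdash e' : \tau_j$ yields $\Gamma_j \vdash \lambda x.\,e' : \sigma_j \to \tau_j$. For $(iii)$, with $e = x\,e_1 \cdots e_l$, starting from the axiom-derived head judgment $\Gamma_j \vdash x : \sigma_j^1 \to \cdots \to \sigma_j^l \to a_j$ and iterating \AE against the arguments $\Gamma_j \vdash e_i : \sigma_j^i$ produces $\Gamma_j \vdash x\,e_1\cdots e_l : a_j$.

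The core of the proof is the ``only if'' directions of $(ii)$ and $(iii)$, for which I would establish generation lemmas by induction on derivations. The obstacle is that \II and \IE may be interleaved freely with the structural rules, so a derivation of $\Gamma \vdash \lambda x.\,e' : \sigma \to \tau$ need not end in \AI. To cope with this I would prove the stronger claim that \emph{every type assigned to $\lambda x.\,e'$ is, up to the laws of $\cap$, an intersection of arrows $\bigcap_{k}(\rho_k \to \mu_k)$ with $\Gamma, x : \rho_k \vdash e' : \mu_k$ for each $k$}; the cases \AI, \II, \IE are then routine, and \AX, \AE are excluded by the shape of the subject. Specialising to a single arrow gives $(ii)$. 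An analogous generation lemma for the head variable of $x\,e_1\cdots e_l$ gives $(iii)$: the atomic target $a_j$ rules out a final \II, the chain of \AE applications is recovered, and the head type $\sigma_j^1 \to \cdots \to \sigma_j^l \to a_j$ is extracted as a conjunct of $\Gamma_j(x)$ via \IE, since a variable subject admits only \AX and \IE.

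The main difficulty I anticipate is phrasing these generation lemmas at the right level of generality — with an arbitrary target type, rather than just an arrow or an atom — so that the induction passes cleanly through the intersection rules. An alternative I would keep in reserve is to first prove a normal-form theorem for derivations, namely that every derivation is equivalent to one in which \II occurs only at the root and \IE only immediately above axioms; once derivations are in this shape, all three decompositions can be read off the backbone directly.
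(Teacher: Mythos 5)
Items $(i)$ and $(ii)$ of your proposal are essentially sound: $(i)$ is immediate from \II\ and \IE, and your generation claim for abstraction subjects (every type of $\lambda x.e'$ is an intersection of arrows $\rho_k \to \mu_k$ with $\Gamma, x : \rho_k \vdash e' : \mu_k$) is provable and yields $(ii)$; note only that in the only-if direction of $(ii)$ the term $e$ need not be an abstraction (it can be a variable or application of arrow type), in which case you must take $e' = e\,x$ and invoke weakening rather than your generation lemma. The genuine gap is in $(iii)$. The step \enquote{the head type $\sigma_j^1 \to \cdots \to \sigma_j^l \to a_j$ is extracted as a conjunct of $\Gamma_j(x)$ via \IE} cannot be justified: the system of Definition \ref{def:type-system} has no subtyping, and \IE\ may be applied to the \emph{intermediate applications} $x\,e_1\cdots e_i$, splitting an intersection that sits nested inside the target of an arrow of $\Gamma_j(x)$; such a step can neither be permuted up to the variable nor absorbed into a single arrow conjunct. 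Concretely, let $\Gamma = \{x : p \to ((q \to a) \cap b),\; y : p,\; z : q\}$ with distinct atoms $p,q,a,b$. Then $\Gamma \vdash x\,y\,z : a$ (apply, use \IE\ on $x\,y : (q \to a) \cap b$, apply again), yet every type derivable for $x$ has $p \to ((q \to a) \cap b)$ as its sole top-level conjunct, and this type is not of the form $\sigma^1 \to \sigma^2 \to a$; likewise $y$ and $z$ cannot serve as heads. So the right-hand side of $(iii)$ is unsatisfiable, and $(iii)$ is \emph{false} for arbitrary contexts --- no elaboration of your argument can close this. Your fallback plan fails on the same example: the \IE\ step on $x\,y$ cannot be moved next to an axiom, because the intersection it splits is produced by \AE, not by \II.

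This also clarifies the relation to the paper: the paper never proves (nor uses) this lemma --- it is disabled in the source --- and its intended role is played by the inline analysis in the proof of Lemma \ref{lem:rank3_inner_equivalence} (\enquote{$\Leftarrow$}: $M$ in beta normal form, case analysis on the head), which is carried out only for the concrete contexts of the construction. There, every type bound in the contexts ($\sigma_f$, $\sigma_t$, the atoms $l,r,\bullet$, and later $\sigma_*,\sigma_0$) is an intersection of arrows $\alpha_1 \to \cdots \to \alpha_k \to b$ in which no arrow target is itself an intersection; under this shape restriction the intermediate types have no proper conjuncts to split, the interleaving of \IE\ with \AE\ degenerates, and your extraction of a flat arrow chain from $\Gamma_j(x)$ becomes correct. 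To repair your proposal, either build this hypothesis on $\Gamma_1,\ldots,\Gamma_n$ into the statement of $(iii)$, or weaken its conclusion to a chain of conjunct memberships ($\Gamma_j \vdash x : \sigma_j^1 \to \tau_j^1$ where $\sigma_j^{i+1} \to \tau_j^{i+1}$ is a conjunct of $\tau_j^i$ and $a_j$ is a conjunct of $\tau_j^l$), which is what generation actually provides.
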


\fi

\begin{lemma}
\label{lem:rank3_inner_equivalence}
Given $q \in Q$, $s = a_1 \ldots a_n \in \Sigma^n$ and $p \in [1,n]$ we have
$(q, p, s)$ is accepting in $T$ using at most $n$ tape cells iff there exists a term $e$ such that $\Gamma_p \vdash e : \tup{q}{a_p}$ and $\Gamma_i \vdash e : a_i$ for $i \in [1,n] \setminus \{p\}$, where
\begin{itemize}
\item $\Gamma = \{x_f : \sigma_f\} \cup \{x_t : \sigma_t \mid t \in \delta\} $
\item $\Gamma_1 = \Gamma \cup \{y_1 : l\} \cup \{y_j : \bullet \mid j \in [2,n-1]\}$
\item $\Gamma_i = \Gamma \cup \{y_{i-1} : r, y_i : l\} \cup \{y_j : \bullet \mid j \in [1,n-1] \setminus \{i-1,i\}\}$ for $i \in [2,n-1]$
\item $\Gamma_n = \Gamma \cup \{y_{n-1} : r\} \cup \{y_j : \bullet \mid j \in [1,n-2]\}$
\end{itemize}
\end{lemma}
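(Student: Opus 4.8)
The plan is to prove the two implications separately, in each case by induction, and throughout to work with the witness $e$ in $\beta$-normal form. This reduction is legitimate because $\beta$-reduction acts on the term alone and subject reduction holds for the system of Definition~\ref{def:type-system}, so reducing a witness preserves \emph{all} the judgements $\Gamma_i \vdash e : \cdots$ at once; hence a witness exists iff a normal one does. Since every target type is an atom (no \SUB{} or top rule is available), a normal $e$ cannot be an abstraction and must be neutral, i.e. of the form $h\,e_1\cdots e_k$ with $h$ a variable from $\Gamma$. This head-variable shape is what drives both directions.

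For \enquote{accepting $\Rightarrow$ witness} I would induct on the length of the accepting run of $(q,p,s)$. If $q=q_f$, take $e=x_f$: from $\sigma_f$ the rule \IE{} yields $\tup{q_f}{a_p}$ in $\Gamma_p$ and $a_i$ in each $\Gamma_i$. Otherwise let $t=((q,a_p)\mapsto(q',c',+1))\in\delta$ be the applicable transition (the leftward case is symmetric) and set $e=x_t\,y_p\,e'$, with $e'$ the witness for the successor $(q',p{+}1,s')$ given by the induction hypothesis. The key computation is that $y_p$ has type $l$ in $\Gamma_p$, type $r$ in $\Gamma_{p+1}$, and $\bullet$ in every other $\Gamma_i$; picking the matching conjunct of $\sigma_t$ by \IE{} and applying \AE{} twice produces $\tup{q}{a_p}$ at the head cell, $a_{p+1}$ at the newly scanned cell, and $a_i$ elsewhere, exactly the atoms $e'$ already supplies for $s'$.

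For the converse I would induct on the structure of the normal $e=h\,e_1\cdots e_k$. Because the marker atoms $l,r,\bullet$ never occur as a codomain in $\sigma_f$ or any $\sigma_t$ and the atom classes are disjoint, a head analysis in $\Gamma_p$ forces $h\in\{x_f\}\cup\{x_t\}$: if $h=x_f$, producing $\tup{q}{a_p}$ requires $q=q_f$, so $(q,p,s)$ is already accepting; if $h=x_t$, the only conjunct of $\sigma_t$ with a $\tup{\cdot}{\cdot}$ codomain pins $t$ to a transition with source $(q,a_p)$, forces $k=2$ and $e_1=y_p$ (the only term typable with $l$ in $\Gamma_p$), and leaves $e_2$ with type $c'$ in $\Gamma_p$. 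Reading $e=x_t\,y_p\,e_2$ off in the remaining contexts, using that $y_p$ is typed $r$ in $\Gamma_{p+1}$ and $\bullet$ otherwise, shows $e_2$ must carry $\tup{q'}{a_{p+1}}$ in $\Gamma_{p+1}$ and $a_i$ elsewhere, i.e. $e_2$ is a witness for $(q',p{+}1,s')$; the induction hypothesis then makes the successor accepting, hence so is $(q,p,s)$.

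I expect the main obstacle to be this converse direction, precisely the argument that a \emph{single} shared subterm $e_2$ is simultaneously constrained by all $n$ contexts to be a witness for one well-defined successor configuration: this is where the apparent parallelism of the encoding must be shown to collapse to a single deterministic step, and it rests delicately on the disjointness of the atom classes together with marker atoms occurring only in argument position. A secondary point needing care is the boundary behaviour at cells $1$ and $n$: one must check that the relevant gap variable ($y_p$ for a right move, $y_{p-1}$ for a left move) lies in $\Gamma$ exactly when the move keeps the head inside $[1,n]$, so that moves leaving the tape admit no witness, matching the \enquote{at most $n$ tape cells} restriction.
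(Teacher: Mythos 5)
Your proposal is correct and follows essentially the same route as the paper's proof: the forward direction by induction on the length of the accepting run, building the witness as nested applications $x_t\,y_p\,(\cdots x_f\cdots)$, and the converse by induction on the structure of a $\beta$-normal witness with a head-variable case analysis ($x_f$ versus $x_t$, the latter forcing the first argument to be $y_p$ or $y_{p-1}$). The points you flag as delicate---the normal-form reduction, the collapse of the $n$ parallel judgements to a single successor configuration, and the boundary cells enforcing the space bound---are precisely the ones the paper's case analysis handles.
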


\begin{proof}
By construction $(y_i : l) \in \Gamma_i$ and $(y_i : r) \in \Gamma_{i+1}$ for $i \in [1,n-1]$. Intuitively, this property is used to identify \enquote{neighboring} contexts.

{\bf \enquote{$\Rightarrow$}:}
Assume $(q, p, a_1 \ldots  a_n)$ is accepting in $T$ using at most $n$ tape cells. If $q = q_f$, then $\Gamma_i \vdash x_f : a_i$ for $i \in [1,n] \setminus \{p\}$ and $\Gamma_p \vdash x_f : \tup{q}{a_p}$.
To capture computation, we represent the transition $C \stackrel{t \in \delta}{\leadsto} C'$ with $C = (q, p, a_1 \ldots a_p \ldots a_n)$ and $C' = (q', p+1, a_1 \ldots a_{p-1}, c', a_{p+1} \ldots a_n)$ by the term $M_C^{C'} = x_t\,y_p$ with 
\begin{align*}
&\Gamma_i \vdash M_C^{C'} : a_i \to a_i \text{ for } i \in [1,n] \setminus \{p,p+1\} \\
&\text{and } \Gamma_p \vdash M_C^{C'} : c' \to \tup{q}{a_p} \text{ and } \Gamma_{p+1} \vdash M_C^{C'} : \tup{q'}{a_{p+1}} \to a_{p+1}
\end{align*}
Similarly, we represent the transition $C \stackrel{t \in \delta}{\leadsto} C'$ with $C = (q, p, a_1 \ldots a_p \ldots a_n)$ and $C' = (q', p-1, a_1 \ldots a_{p-1}, c', a_{p+1} \ldots a_n)$ by the term $M_C^{C'} = x_t\,y_{p-1}$ with 
\begin{align*}
&\Gamma_i \vdash M_C^{C'} : a_i \to a_i \text{ for } i \in [1,n] \setminus \{p-1,p\} \\
&\text{and } \Gamma_p \vdash M_C^{C'} : c' \to \tup{q}{a_p} \text{ and } \Gamma_{p-1} \vdash M_C^{C'} : \tup{q'}{a_{p-1}} \to a_{p-1}
\end{align*}
Given a sequence of configurations $(q,p,s) = C_1 \leadsto \ldots \leadsto C_m = (q_f, p', s')$ for some $p' \leq n$ and $s' \in \Sigma^n$ let $M = M_{C_{1}}^{C_2} ( M_{C_{2}}^{C_{3}} ( \ldots ( M_{C_{m-1}}^{C_{m}} \, x_f ) \ldots ))$. By induction on $m$ we have $\Gamma_i \vdash M : a_i$ for $i \in [1,n] \setminus \{p\}$ and $\Gamma_p \vdash M : \tup{q}{a_p}$.

{\bf \enquote{$\Leftarrow$}:}
Assume $\Gamma_i \vdash M : a_i$ for $i \in [1,n] \setminus \{p\}$ and $\Gamma_p \vdash M : \tup{q}{a_p}$ for some term $M$ in beta normal form (cf.~\cite{ghilezan96}). We show that $(q,p,s)$ is accepting by induction on the structure of $M$.
Due to $\Gamma_p \vdash M : \tup{q}{a_p}$, we have three possible cases:
\begin{description}
\item[Case 1:] $M = x_f$. Therefore, $q = q_f$ and $(q,p,s)$ is accepting.
\item[Case 2:] $M = x_t\,L\,N$ with $t = (q,a_p) \mapsto (q', c', +1)$. We have
\begin{align*}
& \Gamma_p \vdash M : \tup{q}{a_p}\\
\implies & \Gamma_p \vdash L : l \\
\implies & L = y_p \text{ and } 1 \leq p < n \\
\implies & \Gamma_{p+1} \vdash L : r \text{ and } \Gamma_i \vdash L : \bullet \text{ for } i \in [1,n] \setminus \{p,p+1\}\\
\implies & \Gamma_i \vdash N : a_i \text{ for } i \in [1,n] \setminus \{p,p+1\} \\
& \text{and } \Gamma_p \vdash N : c' \text{ and } \Gamma_{p+1} \vdash N : \tup{q'}{a_{p+1}}
\end{align*}
By the induction hypothesis $(q', p+1, a_1 \ldots a_{p-1} c' a_{p+1} \ldots a_n)$ is accepting in $T$ using at most $n$ tape cells. Therefore, $(q,p,s)$ is accepting in $T$ using at most $n$ tape cells.
\item[Case 3:] $M = x_t\,L\,N$ with $t = (q,a_p) \mapsto (q', c', -1)$. We have
\begin{align*}
& \Gamma_p \vdash M : \tup{q}{a_p}\\
\implies & \Gamma_p \vdash L : r \\
\implies & L = y_{p-1} \text{ and } 1 < p \leq n \\
\implies & \Gamma_{p-1} \vdash L : l \text{ and } \Gamma_i \vdash L : \bullet \text{ for } i \in [1,n] \setminus \{p-1,p\}\\
\implies & \Gamma_i \vdash N : a_i \text{ for } i \in [1,n] \setminus \{p-1,p\}\\
& \text{and } \Gamma_p \vdash N : c' \text{ and } \Gamma_{p-1} \vdash N : \tup{q'}{a_{p-1}}\\
\end{align*}
By the induction hypothesis $(q', p-1, a_1 \ldots a_{p-1} c' a_{p+1} \ldots a_n)$ is accepting in $T$ using at most $n$ tape cells. Therefore, $(q,p,s)$ is accepting in $T$ using at most $n$ tape cells.
\end{description}
\end{proof}

\newpage

In the second part of our construction we create the particular inhabitation problem to apply Lemma \ref{lem:rank3_inner_equivalence}. The main idea is to use a technique similar to the encoding of expansion in expanding tape machines~\cite{urzy09}. However, instead of introducing new functional dependencies (new machine instructions in the sense of~\cite{urzy09}) it is sufficient to introduce type atoms that link neighboring judgements. We define the following types:
\begin{align*}
\sigma_* = &((\bullet \to \circ) \to \circ) \cap ((\bullet \to *) \to *) \cap ((l \to *) \to \#) \cap ((r \to \#) \cap (\bullet \to \$) \to \$) \\
\sigma_0 = &((\bullet \to \tup{q_0}{\sp}) \to \circ) \cap ((\bullet \to \sp) \to *) \cap ((l \to \sp) \to \#) \cap ((r \to \sp) \to \$)\\
\tau_\star  = & \sigma_0 \to \sigma_* \to \sigma_f \to \sigma_{t_1} \to \ldots \to \sigma_{t_k} \to (l \to \circ) \cap (r \to \#) \cap (\bullet \to \$) \\
& \text{where } \delta = \{t_1, \ldots, t_k\}\\
\end{align*}
Note that $\sp \in \Sigma$ is the space symbol. Again, let us develop an intuition for the defined types:
\begin{itemize}
\item $\circ$ marks the first, $\$$ the last, $\#$ the second last and $*$ all other cells during tape expansion
\item $\sigma_*$ is used for tape expansion at its end
\item $\sigma_0$ initializes $T$ to the state $q_0$ and the empty tape replacing $\circ, *, \#, \$$ marks 
\end{itemize}
Observe that $\rank(\tau_\star) = \order(\tau_\star) = 3$. 

\begin{lemma}
\label{lem:rank3_outer_equivalence}
There exists a term $M$ such that $\vdash M : \tau_\star$ iff 
there exists an $n \in \NN$ and a term $N$ such that $\Gamma_1 \vdash N : \tup{q_0}{\sp}$ and $\Gamma_i \vdash N : \sp$ for $i \in [2,n]$ where \begin{itemize}
\item $\Gamma = \{x_* : \sigma_*, x_0 : \sigma_0, x_f : \sigma_f\} \cup \{x_t : \sigma_t \mid t \in \delta\} $
\item $\Gamma_1 = \Gamma \cup \{y_1 : l\} \cup \{y_j : \bullet \mid j \in [2,n-1]\}$
\item $\Gamma_i = \Gamma \cup \{y_{i-1} : r, y_i : l\} \cup \{y_j : \bullet \mid j \in [1,n-1] \setminus \{i-1,i\}\}$ for $i \in [2,n-1]$
\item $\Gamma_n = \Gamma \cup \{y_{n-1} : r\} \cup \{y_j : \bullet \mid j \in [1,n-2]\}$
\end{itemize}
\end{lemma}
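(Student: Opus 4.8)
The plan is to analyze a $\beta$-normal inhabitant of $\tau_\star$ by successive inversion, peeling off first the argument abstractions and then the tape-building applications of $x_*$ and $x_0$, reading off both the tape length $n$ and the computation term $N$ from the depth of that peeling. If $\tau_\star$ is inhabited it has a $\beta$-normal inhabitant (cf.~\cite{ghilezan96}), so I may assume $M$ is normal. Since $\tau_\star$ is a chain of arrows whose target is the intersection $(l \to \circ) \cap (r \to \#) \cap (\bullet \to \$)$, inversion on \AI{} forces $M = \lambda x_0 x_* x_f x_{t_1} \ldots x_{t_k}.\, M'$ with $\Gamma \vdash M' : (l \to \circ) \cap (r \to \#) \cap (\bullet \to \$)$, where $\Gamma$ is exactly the context of the statement (without the $y_j$). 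It therefore suffices to relate such bodies $M'$ to terms $N$ carrying the stated cell types.

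For the core I would isolate an invariant $\Phi_m$ describing the body obtained after abstracting $y_1, \ldots, y_m$. For $m \ge 1$ put $\Delta^m_1 = \{y_1 : l\} \cup \{y_j : \bullet \mid 2 \le j \le m\}$, $\Delta^m_i = \{y_{i-1} : r, y_i : l\} \cup \{y_j : \bullet \mid j \notin \{i-1,i\}\}$ for $2 \le i \le m$, $\Delta^m_\# = \{y_m : r\} \cup \{y_j : \bullet \mid j < m\}$, and $\Delta^m_\$ = \{y_j : \bullet \mid 1 \le j \le m\}$. Say $\Phi_m(P)$ holds when $\Gamma, \Delta^m_1 \vdash P : \circ$, $\Gamma, \Delta^m_i \vdash P : *$ for $2 \le i \le m$, $\Gamma, \Delta^m_\# \vdash P : \#$, and $\Gamma, \Delta^m_\$ \vdash P : \$$. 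The key claim is that a normal form $P$ satisfies $\Phi_m$ iff either $P = x_*(\lambda y_{m+1}.\,P')$ with $\Phi_{m+1}(P')$, or $P = x_0(\lambda y_{m+1}.\,N)$ where, writing $n = m+2$, one has $\Gamma_1 \vdash N : \tup{q_0}{\sp}$ and $\Gamma_i \vdash N : \sp$ for $i \in [2,n]$. The components of $\sigma_*$ realise the step case: $(\bullet \to \circ) \to \circ$ and $(\bullet \to *) \to *$ propagate the head cell and the interior cells, $(l \to *) \to \#$ turns the former $\#$-boundary into a new interior cell $m+1$, and crucially $(r \to \#) \cap (\bullet \to \$) \to \$$ couples the $\#$- and $\$$-constraints so that exactly one fresh cell is inserted per level; the components of $\sigma_0$ convert the four marks into the cell types $\tup{q_0}{\sp}$ (head, from $\circ$) and $\sp$ (from $*, \#, \$$), where the all-$\bullet$ pattern carrying $\$$ finally becomes the genuine last cell $\Gamma_n$.

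Granting the claim both directions follow. For ``$\Leftarrow$'', given $n$ and $N$ I would set
\[
M = \lambda x_0 x_* x_f x_{t_1} \ldots x_{t_k}.\,\lambda y_1.\, x_*(\lambda y_2.\, \cdots\, x_*(\lambda y_{n-2}.\, x_0(\lambda y_{n-1}.\,N)) \cdots)
\]
with $n-3$ applications of $x_*$, and verify $\vdash M : \tau_\star$ by propagating the four mark typings outward through $\Phi_{n-2}, \ldots, \Phi_1$; this is the routine direction. For ``$\Rightarrow$'', inversion gives $M' = \lambda y_1.\,R_1$ with $\Phi_1(R_1)$, and since $M$ is finite the recursion of the claim must terminate, so after finitely many $x_*$-steps the body is $x_0(\lambda y_{n-1}.\,N)$; this fixes $n$ and exhibits the required $N$.

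The main obstacle is the term-to-structure direction of the inner claim. One must show that a normal form carrying any of $\circ, *, \#, \$$ is necessarily headed by $x_*$ or $x_0$, since no other variable in $\Gamma$ has a result atom among $\{\circ, *, \#, \$\}$ (the variable $x_f$ and the $x_t$ return symbols of $\Sigma$ or atoms $\tup{q}{a}$, and the $y_j$ are non-functional), and that, because every relevant component of $\sigma_*$ and $\sigma_0$ is a unary arrow, the four mark typings must be met with a single head applied to a single argument of the form $\lambda y_{m+1}.(\cdots)$. The delicate point is the $\$$-constraint: it can be satisfied only through $\sigma_*$'s component $(r \to \#) \cap (\bullet \to \$) \to \$$ or through $\sigma_0$'s $(r \to \sp) \to \$$, and it is precisely this shared intersection argument that forces each step to advance by exactly one cell while remaining consistent with the $\#$-constraint, thereby ruling out spurious inhabitants. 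The smallest tapes are covered directly, the base application of $x_0$ at level $1$ giving $n = 3$.
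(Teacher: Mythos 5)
Your proposal is correct and takes essentially the same approach as the paper: your invariant $\Phi_m$ is exactly the level-by-level content of the expansion tree the paper draws (arbitrarily many $x_*$ steps, where the coupled component $(r \to \#) \cap (\bullet \to \$) \to \$$ forces exactly one new cell per level, terminated by an $x_0$ step fixing $n = m+2$), only written out as an explicit induction on normal inhabitants instead of being read off the picture. The head-variable analysis you flag as the delicate point is the same observation the paper encodes in its tree (inhabitants at each level must be identical and headed by $x_*$ or $x_0$), so there is no substantive difference.
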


\begin{proof}
To improve accessibility of the proof, we visualize necessary/sufficient steps (according to the inhabitation algorithm given in \cite{urzy09}) to prove $\vdash M : \tau_\star$. A node $\sigma$ states that $\sigma$ is inhabited in the context described by the union of edge labels along the path from $\sigma$ to the root. Additionally, at each level of the tree the shape of inhabitants (which have to be identical due to the nature of $\cap$) is ascribed on the left. The inhabitant is therefore the composition of the ascriptions. 

\begin{tikzpicture}
  \matrix (m) [matrix of math nodes,row sep=1.2em,column sep=0.5em,minimum width=0.5em]
  {
   \lambda (\Gamma).(\_) : & & \tau_\star \\ \\
   \lambda y_1.(\_) : &  & \hspace{-3em} (l \to \circ) \cap (r \to \#) \cap (\bullet \to \$) \hspace{-3em} \\ \\
   x_*\,(\_) : & \circ & \# & \$  & \\ 
   \lambda y_2.(\_) : & \bullet \to \circ & l \to * & \hspace{-2em} (r \to \#) \cap (\bullet \to \$) \hspace{-2em}\\ \\
   x_*\,(\_) : & \circ & * & \# & \$\\ 
   \lambda y_3.(\_) : & \bullet \to \circ & \bullet \to * & l \to * & \hspace{-1em} (r \to \#) \cap (\bullet \to \$) \hspace{-1em}\\ \\
   x_*\,(\_) : & \circ & * & * & \# & \$\\ 
   \ldots & \ldots & \ldots & \ldots & \ldots & \ldots\\
    };
  \path
    (m-3-3) edge[->] node[fill=white] {$\Gamma$} (m-1-3)
    (m-5-2) edge[->] node[fill=white] {$y_1 : l$} (m-3-3)
    (m-5-3) edge[->] node[fill=white] {$y_1 : r$} (m-3-3)
    (m-5-4) edge[->] node[fill=white] {$y_1 : \bullet$} (m-3-3)
    (m-6-2) edge[->]  (m-5-2)
    (m-6-3) edge[->]  (m-5-3)
    (m-6-4) edge[->]  (m-5-4)
    (m-8-2) edge[->] node[fill=white] {$y_2 : \bullet$} (m-6-2)
    (m-8-3) edge[->] node[fill=white] {$y_2 : l$} (m-6-3)
    (m-8-4) edge[->] node[fill=white] {$y_2 : r$} (m-6-4)
    (m-8-5) edge[->] node[fill=white] {$y_2 : \bullet$} (m-6-4)
    (m-9-2) edge[->]  (m-8-2)
    (m-9-3) edge[->]  (m-8-3)
    (m-9-4) edge[->]  (m-8-4)
    (m-9-5) edge[->]  (m-8-5)
    (m-11-2) edge[->] node[fill=white] {$y_3 : \bullet$} (m-9-2)
    (m-11-3) edge[->] node[fill=white] {$y_3 : \bullet$} (m-9-3)
    (m-11-4) edge[->] node[fill=white] {$y_3 : l$} (m-9-4)
    (m-11-5) edge[->] node[fill=white] {$y_3 : r$} (m-9-5)
    (m-11-6) edge[->] node[fill=white] {$y_3 : \bullet$} (m-9-5)
    (m-12-2) edge[->]  (m-11-2)
    (m-12-3) edge[->]  (m-11-3)
    (m-12-4) edge[->]  (m-11-4)
    (m-12-5) edge[->]  (m-11-5)
    (m-12-6) edge[->]  (m-11-6)
    ;
\end{tikzpicture}

First, the variables in $\Gamma$ and $y_1$ are abstracted. Next follows an arbitrary number of $x_*$ applications in combination with an abstraction, that increments the current width of the tree. Intuitively, $\circ$ is used to mark the first context; $\$$ is used to mark the last context; $\#$ is used to introduce consecutive $y_i : r$ and $y_{i+1} : l$.

\begin{tikzpicture}
  \matrix (m) [matrix of math nodes,row sep=1.2em,column sep=1em,minimum width=1em]
  {
   x_0\,(\_) : & \circ & * & \ldots & * & \# & \$ \\
\lambda y_{n-1}.(\_) : & \bullet \to \tup{q_0}{\sp} & \bullet \to \sp & \ldots & \bullet \to \sp & l \to \sp & r \to \sp\\ \\
 N : & \tup{q_0}{\sp} & \sp & \ldots & \sp & \sp & \sp\\
    };
  \path
    (m-2-2) edge[->]  (m-1-2)
    (m-2-3) edge[->]  (m-1-3)
    (m-2-5) edge[->]  (m-1-5)
    (m-2-6) edge[->]  (m-1-6)
    (m-2-7) edge[->]  (m-1-7)
    (m-4-2) edge[->] node[fill=white] {$y_{n-1} : \bullet$} (m-2-2)
    (m-4-3) edge[->] node[fill=white] {$y_{n-1} : \bullet$} (m-2-3)
    (m-4-5) edge[->] node[fill=white] {$y_{n-1} : \bullet$} (m-2-5)
    (m-4-6) edge[->] node[fill=white] {$y_{n-1} : l$} (m-2-6)
    (m-4-7) edge[->] node[fill=white] {$y_{n-1} : r$} (m-2-7)
    ;
\end{tikzpicture}

Finally, an application of $x_0$ in combination with one abstraction of $y_{n-1}$, where $n$ is the current width of the tree, produces exactly $\Gamma_1 \vdash N : \tup{q_0}{\sp}$ and $\Gamma_i \vdash N : \sp$ for $i \in [2,n]$, that can be verified inductively reading the edge labels along the path to the root.

Using the presented tree, we sketch the proof as follows.

{\bf \enquote{$\Rightarrow$}:} If $\tau_\star$ is inhabited, then we can read the tree from the root to the leaves to get necessary conditions $\Gamma_1 \vdash N : \tup{q_0}{\sp}$ and $\Gamma_i \vdash N : \sp$ for $i \in [2,n]$ for some term $N$.

{\bf \enquote{$\Leftarrow$}:} If $\Gamma_1 \vdash N : \tup{q_0}{\sp}$ and $\Gamma_i \vdash N : \sp$ for $i \in [2,n]$, then we can read the tree from leaves to the root to get the sufficient condition, namely the inhabitant $M$, such that $\vdash M : \tau_\star$.
\end{proof}

\begin{theorem}
Intersection type inhabitation (Problem \ref{prb:inhabitation}) is undecidable in rank $3$ and order $3$.
\end{theorem}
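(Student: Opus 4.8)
The plan is to chain the two equivalences established above into a single computable reduction from the halting problem for Turing machines to inhabitation of $\tau_\star$, and then to invoke the already-observed fact that $\rank(\tau_\star)=\order(\tau_\star)=3$ so that the reduction lands in rank $3$ and order $3$. First I would fix an arbitrary Turing machine $T$ and form $\tau_\star$ exactly as in Section \ref{sec:tm}; this construction is manifestly computable in $T$. By Lemma \ref{lem:rank3_outer_equivalence}, $\tau_\star$ is inhabited iff there is some $n \in \NN$ and a term $N$ with $\Gamma_1 \vdash N : \tup{q_0}{\sp}$ and $\Gamma_i \vdash N : \sp$ for $i \in [2,n]$.

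Next I would instantiate Lemma \ref{lem:rank3_inner_equivalence} at $q = q_0$, $p = 1$ and the all-blank word $s = \sp \cdots \sp$ of length $n$, so that $a_1 = \cdots = a_n = \sp$. That lemma then asserts that such an $N$ exists iff the initial configuration $(q_0, 1, \sp^n)$ is accepting in $T$ using at most $n$ tape cells. Combining the two equivalences gives: $\tau_\star$ is inhabited iff there is some $n$ for which $(q_0, 1, \sp^n)$ is accepting within $n$ tape cells.

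One bookkeeping obstacle must be handled, since the context $\Gamma$ of Lemma \ref{lem:rank3_outer_equivalence} carries the two extra variables $x_* : \sigma_*$ and $x_0 : \sigma_0$ that are absent from the context of Lemma \ref{lem:rank3_inner_equivalence}. I would reconcile the two directions separately. From the inner lemma to the outer one, weakening suffices: any derivation in the smaller context is also a derivation in the larger one. From the outer lemma to the inner one, I would take $N$ in $\beta$-normal form and note that the heads $x_*$ and $x_0$ can only yield the atoms $\circ, *, \#, \$$, none of which equals $\tup{q_0}{\sp}$ or $\sp$, and none of which arises as a subgoal in the case analysis of Lemma \ref{lem:rank3_inner_equivalence}; hence $x_*$ and $x_0$ cannot occur in $N$, and the structural induction of Lemma \ref{lem:rank3_inner_equivalence} applies verbatim.

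Finally I would identify the quantifier over $n$ with ordinary halting. The statement that $(q_0, 1, \sp^n)$ is accepting within $n$ tape cells for \emph{some} $n$ is equivalent to $T$ reaching $q_f$ from the blank initial configuration: a halting computation visits only finitely many cells, so some $n$ bounds its space usage and the expanding-tape mechanism of $\sigma_*$ supplies a tape of that width, while conversely any accepting run confined to $n$ cells is a genuine halting computation of $T$. Thus the computable map $T \mapsto \tau_\star$ reduces the (undecidable) halting problem on empty input to Problem \ref{prb:inhabitation}, and since $\rank(\tau_\star)=\order(\tau_\star)=3$ every instance produced by the reduction lies in rank $3$ and order $3$, yielding undecidability there. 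The step requiring the most care is the context-reconciliation argument together with the precise translation between bounded acceptance ``for some $n$'' and unrestricted halting; the remainder is routine chaining of the two lemmas.
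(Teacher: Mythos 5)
Your proposal is correct and follows essentially the same route as the paper: chain Lemma \ref{lem:rank3_outer_equivalence} with Lemma \ref{lem:rank3_inner_equivalence} (instantiated at $q_0$, $p=1$, $s=\sp^n$), observe $\rank(\tau_\star)=\order(\tau_\star)=3$, and conclude via the halting problem on empty input. Your context-reconciliation step (weakening one way; the head-atom analysis showing $x_*, x_0$ cannot occur in a normal inhabitant the other way) is just a carefully spelled-out version of the paper's one-line remark that $\sigma_*$ and $\sigma_0$ ``work on the special atoms $\circ, *, \#, \$$''.
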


\begin{proof}
Adding $x_* : \sigma_*$ and $x_0 : \sigma_0$ to all contexts does not invalidate Lemma \ref{lem:rank3_inner_equivalence} because $\sigma_*$ and $\sigma_0$ work on the special atoms $\circ, *, \#$ and $\$$.
Therefore, by Lemma \ref{lem:rank3_outer_equivalence} and Lemma \ref{lem:rank3_inner_equivalence} we have that $\tau_\star$ is inhabited iff $T$ accepts the empty word.
\end{proof}

\section{Simple Semi-Thue System Simulation}
\label{sec:ssts}
The undecidability result in Section \ref{sec:tm} is of didactic interest because it shows how Turing machine computation can be simulated by proof search directly. If one is only interested in minimal requirements for undecidability, then it is helpful to consider simple semi-Thue systems.

\begin{definition}[Simple Semi-Thue System]
A semi-Thue system over an alphabet $\Sigma$, where each rule has the form $ab \Rightarrow  cd$, for some $a,b,c,d \in \Sigma$, is called a \emph{simple semi-Thue system}.
\end{definition}

\begin{problem}
\label{prb:ssts01}
Given a simple semi-Thue system $R$, is there an $n \in \NN$ such that $0^n \twoheadrightarrow_R 1^n$?
\end{problem}

\begin{lemma}
\label{lem:ssts-word}
Problem \ref{prb:ssts01} is undecidable.
\end{lemma}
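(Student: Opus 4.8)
The plan is to reduce the halting problem for Turing machines (specifically, the space-bounded machines simulated in Section~\ref{sec:tm}, equivalently emptiness of linear bounded automata), which is undecidable, to Problem~\ref{prb:ssts01}. Given such a machine $T = (\Sigma, Q, q_0, q_f, \delta)$, I would build a simple semi-Thue system $R$ over an alphabet containing $0$ and $1$ together with $\Sigma$, the composite symbols $\tup{q}{a}$ recording the current state together with the scanned symbol, and a few auxiliary signal symbols, designed so that $0^n \twoheadrightarrow_R 1^n$ for some $n$ holds exactly when $T$ accepts the empty input (the bound $n$ being the number of tape cells used).

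The encoding is standard: a length-$n$ configuration with the head at position $p$ is written as the word $a_1 \cdots a_{p-1}\,\tup{q}{a_p}\,a_{p+1}\cdots a_n$, folding state and head position into a single composite symbol. The central observation, and the reason that simple semi-Thue systems already suffice, is that every machine step is local and length-preserving, hence realised directly by binary rules: a right move $(q,a)\mapsto(q',a',+1)$ becomes the family $\tup{q}{a}\,x \Rightarrow a'\,\tup{q'}{x}$ for each $x \in \Sigma$, and a left move $(q,a)\mapsto(q',a',-1)$ becomes $x\,\tup{q}{a} \Rightarrow \tup{q'}{x}\,a'$. This simulation phase is the routine part and mirrors the transition encoding already used in Lemma~\ref{lem:rank3_inner_equivalence}. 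It then remains to glue the uniform source $0^n$ and the uniform target $1^n$ onto a genuine run: for initialization I would add rules implementing a travelling marker that rewrites the blank substrate $0^n$ into the initial configuration $\tup{q_0}{\sp}\,\sp^{n-1}$, and for finalization a travelling \enquote{accept} signal, triggered by any $\tup{q_f}{a}$, that floods the tape with $1$s. Forward correctness — an accepting run yields a derivation $0^n \twoheadrightarrow_R 1^n$ — is then immediate by composing the three phases.

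The hard part is backward correctness, and it is entirely about the boundaries. Since every rule has width two and preserves length, the system cannot directly detect the two ends of the markerless words $0^n$ and $1^n$; a carelessly created setup- or fill-signal gets stranded at an end as a non-$1$ symbol that no rule can remove. The crux is therefore to design the auxiliary rules so that the uniform endpoints themselves force the signals to be created and annihilated exactly at the boundary, and then to prove the global faithfulness claim that the \emph{only} way to rewrite the uniform word $0^n$ into the uniform word $1^n$ is through a faithful initialization–simulation–finalization sequence. Concretely, one must show that any spurious application of a transition rule, or any extra seed permitted by the locality and nondeterminism of the rules, necessarily strands some symbol outside $\{1\}$ and thereby blocks reaching $1^n$. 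Verifying this invariant — that no \enquote{cheating} derivation can reach the uniform target — is where the genuine work lies, whereas the transition rules and the three-phase layout are straightforward.
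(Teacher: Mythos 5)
Your proposal takes essentially the same route as the paper: the paper's own proof is only a three-sentence sketch — a reduction from the empty-word halting problem for Turing machines, exploiting that each machine step modifies just two neighboring cells (so binary, length-preserving rules suffice) and that the existential parameter $n$ absorbs the unknown tape length — and your construction with composite symbols $\langle q, a\rangle$, transition rules, and initialization/finalization phases is precisely a fleshing-out of that sketch. The backward-correctness (anti-cheating) issues you flag are genuine but are passed over silently in the paper as well, so your attempt is, if anything, more explicit about where the remaining work lies.
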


\begin{proof}
Reduction from the empty word halting problem for Turing machines. Turing machine computation requires at each step the modification of two neighboring cells. The unknown parameter $n$ corresponds to the required tape length.
\end{proof}

Given a simple semi-Thue system $R$ over an alphabet $\Sigma$ with $\bullet, *, \#, \$, l, r \not\in \Sigma$ we (similarly to Section \ref{sec:tm}) construct the following types of rank and order at most $3$:
\begin{align*}
\sigma_* & = ((\bullet \to *) \to *) \cap ((l \to *) \to \#) \cap ((r \to \#) \cap (\bullet \to \$) \to \$) \\
\sigma_0 & = ((\bullet \to 0) \to *) \cap ((l \to 0) \to \#) \cap ((r \to 0) \to \$)\\
\sigma_1 &= 1\\
\sigma_{ab \Rightarrow cd} & = (l \to c \to a) \cap (r \to d \to b) \cap \bigcap\limits_{e \in \Sigma} (\bullet \to e \to e) \text{ for each } ab \Rightarrow cd \in R \\
\tau_\star & = \sigma_0 \to \sigma_* \to \sigma_1 \to \sigma_{t_1} \to \ldots \to \sigma_{t_k} \to (l \to *) \cap (r \to \#) \cap (\bullet \to \$) \\
& \text{where } R = \{t_1, \ldots, t_k\}
\end{align*}

\begin{lemma}
The type $\tau_\star$ is inhabited iff there exists an $n \in \NN$ such that $0^n \twoheadrightarrow_R 1^n$.
\end{lemma}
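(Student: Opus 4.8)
The plan is to mirror the two-step development of Section~\ref{sec:tm}, replacing the Turing-machine components by the rewriting machinery of $R$. First I would isolate an \emph{inner} equivalence capturing a single derivation $s \twoheadrightarrow_R 1^n$ by typability in the neighboring contexts $\Gamma_1, \ldots, \Gamma_n$, built over $\Gamma = \{x_* : \sigma_*, x_0 : \sigma_0, x_1 : \sigma_1\} \cup \{x_t : \sigma_t \mid t \in R\}$ exactly as in Lemma~\ref{lem:rank3_outer_equivalence}, and then an \emph{outer} equivalence reducing inhabitation of $\tau_\star$ to the existence of an initialized tape. Concretely, the inner claim is: for $s = a_1 \ldots a_n \in \Sigma^n$ we have $s \twoheadrightarrow_R 1^n$ iff there is a term $e$ with $\Gamma_i \vdash e : a_i$ for every $i \in [1,n]$.

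For the \enquote{$\Rightarrow$} direction of the inner claim I would induct on the length of the derivation. The base case $s = 1^n$ is witnessed by $e = x_1$, since $\sigma_1 = 1$ gives $\Gamma_i \vdash x_1 : 1$ for all $i$. For the inductive step, a rewrite $s \Rightarrow_R s'$ applying $ab \Rightarrow cd$ at position $p$ is represented by $M = x_{ab \Rightarrow cd}\, y_p\, N$, where $N$ witnesses $s' \twoheadrightarrow_R 1^n$: in $\Gamma_p$ the assumption $y_p : l$ selects the conjunct $l \to c \to a$, in $\Gamma_{p+1}$ the assumption $y_p : r$ selects $r \to d \to b$, and in every other $\Gamma_i$ the assumption $y_p : \bullet$ selects $\bullet \to e \to e$, so that $M$ rebuilds the source word from the target word supplied by $N$. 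The \enquote{$\Leftarrow$} direction proceeds by induction on a beta-normal inhabitant $e$: since the only context variables whose result atoms lie in $\Sigma$ are $x_1$ and the $x_{ab \Rightarrow cd}$ (the variables $x_*, x_0$ yield only the markers $*$, $\#$, $\$$, and the $y_j$ only $l$, $r$, $\bullet$), the head of $e$ is forced; the case $e = x_1$ gives $s = 1^n$, and in the case $e = x_{ab \Rightarrow cd}\, y_j\, N$ the types of $y_j$ across the $\Gamma_i$ pin the rewrite to positions $j, j+1$, whence the induction hypothesis applied to $N$ yields $s \twoheadrightarrow_R 1^n$.

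The outer equivalence---$\vdash M : \tau_\star$ iff there exist $n$ and $N$ with $\Gamma_i \vdash N : 0$ for all $i \in [1,n]$---I would obtain by the same tree-shaped inhabitation analysis used in the proof of Lemma~\ref{lem:rank3_outer_equivalence}, simplified here by the absence of the marker $\circ$: after abstracting the variables of $\Gamma$ and $y_1$, each iterated application of $x_*$ together with one abstraction grows the tape by one cell, marking the first and interior cells by $*$, the second-to-last by $\#$, and the last by $\$$; a concluding application of $x_0$ followed by the abstraction of $y_{n-1}$ then replaces all markers, initializing every cell to $0$ and producing exactly $\Gamma_i \vdash N : 0$. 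Reading this tree from the root to the leaves yields necessity, and from the leaves to the root yields sufficiency.

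Combining the two equivalences finishes the proof: $\tau_\star$ is inhabited iff there exist $n$ and $N$ with $\Gamma_i \vdash N : 0$ for all $i$ (outer equivalence), iff $0^n \twoheadrightarrow_R 1^n$ for some $n$ (inner equivalence instantiated at $s = 0^n$). As in the proof of the theorem of Section~\ref{sec:tm}, one observes that the presence of $x_* : \sigma_*$ and $x_0 : \sigma_0$ does not interfere with the inner equivalence, because $\sigma_*$ and $\sigma_0$ act only on the special atoms $*, \#, \$ \notin \Sigma$. I expect the main obstacle to be the soundness (\enquote{$\Leftarrow$}) part of the inner claim: one must argue carefully that a beta-normal inhabitant is forced into the disciplined \enquote{one layer per rewrite} shape---that its head variable can only be $x_1$ or some $x_{ab \Rightarrow cd}$, that the first argument must be a single $y_j$, and that the simultaneous typings across all neighboring contexts are mutually consistent---whereas the tape-expansion argument is routine once the corresponding step of Lemma~\ref{lem:rank3_outer_equivalence} is at hand.
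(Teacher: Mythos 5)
Your proposal is correct and takes essentially the same route as the paper: the paper's proof simply states that the argumentation of Lemma~\ref{lem:rank3_outer_equivalence} (tape expansion via $\sigma_*$, initialization via $\sigma_0$) and Lemma~\ref{lem:rank3_inner_equivalence} (transitions via $\sigma_t$, termination via $\sigma_1$) carries over, which is exactly the inner/outer decomposition you spell out. Your detailed reconstruction---including the forced head-variable analysis in the \enquote{$\Leftarrow$} direction of the inner claim and the $\circ$-free expansion tree---is a faithful elaboration of what the paper leaves implicit.
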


\begin{proof}
The argumentation is the same as in the proofs of Lemma \ref{lem:rank3_outer_equivalence} and Lemma \ref{lem:rank3_inner_equivalence}. The type $\sigma_*$ is used for initial expansion; $\sigma_0$ is used for initialization; $\sigma_t$ for transition for $t \in R$; and $\sigma_1$ for termination.
\end{proof}

\bibliographystyle{plainurl}
\bibliography{bibliographyLS14}
\end{document}